\definecolor{myurlcolor}{rgb}{0,0,0.4}
\definecolor{mycitecolor}{rgb}{0,0.5,0}
\definecolor{myrefcolor}{rgb}{0.5,0,0}
\newcommand*{\addFileDependency}[1]{
  \typeout{(#1)}
  \@addtofilelist{#1}
  \IfFileExists{#1}{}{\typeout{No file #1.}}
}
\newcommand*{\myexternaldocument}[1]{
    \externaldocument{#1}
    \addFileDependency{#1.tex}
    \addFileDependency{#1.aux}
}
\newcommand{\beq}[0]{\begin{equation}}
\newcommand{\eeq}[0]{\end{equation}}
\newcommand{\one}{\leavevmode\hbox{\small1\normalsize\kern-.33em1}}
\def\be{\begin{equation}}
\def\ee{\end{equation}}
\def\ben{\begin{eqnarray}}
\def\een{\end{eqnarray}}
\def\eea{\end{array}}
\def\bea{\begin{array}}
\newcommand{\Tr}[1]{\mathrm{Tr}#1}
\newcommand{\bei}{\begin{itemize}}
\newcommand{\eei}{\end{itemize}}
\newcommand{\ket}[1]{|#1\rangle}
\newcommand{\bra}[1]{\langle#1|}
\newcommand{\proj}[1]{\ket{#1}\!\!\bra{#1}}
\newcommand{\I}{\mathbbm{1}}
\renewcommand{\emph}[1]{\textbf{#1}}
\newtheorem*{rep@theorem}{\rep@title}
\newcommand{\newreptheorem}[2]{%
\newenvironment{rep#1}[1]{%
 \def\rep@title{#2 \ref{##1}}%
 \begin{rep@theorem}}%
 {\end{rep@theorem}}}
\theoremstyle{plain}
\newtheorem*{thm*}{Theorem}
\newtheorem{fakt}{Fact}
\theoremstyle{definition}
\theoremstyle{remark}
\begin{document}

\title{Witnessing network steerability of every bipartite entangled state without inputs}
\author{Shubhayan Sarkar}
\email{shubhayan.sarkar@ug.edu.pl}
\affiliation{Laboratoire d’Information Quantique, Université libre de Bruxelles (ULB), Av. F. D. Roosevelt 50, 1050 Bruxelles, Belgium}
\affiliation{Institute of Informatics, Faculty of Mathematics, Physics and Informatics,
University of Gdansk, Wita Stwosza 57, 80-308 Gdansk, Poland}

\begin{abstract}	
Quantum steering is an asymmetric form of quantum nonlocality where one can detect whether a measurement on one system can steer or change another distant system. It is well-known that there are quantum states that are entangled but unsteerable in the standard quantum steering scenario. Consequently, a long-standing open problem in this regard is whether the steerability of every entangled state can be activated in some way. In this work, we consider quantum networks and focus on the swap-steering scenario without inputs and find linear witnesses of network steerability corresponding to any negative partial transpose (NPT) bipartite state and a large class of bipartite states that violate the 
computable cross-norm (CCN) criterion. Interestingly, one of the inequalities shows an arbitrarily large gap between network steerable and unsteerable models. This is, in fact, the first instance where such an unbounded gap could be established between any form of network local and nonlocal correlations, and notably, it arises in the case of no inputs from either party. Furthermore, by considering that the trusted party can perform tomography of the incoming subsystems, we construct linear inequalities to witness swap-steerability of every bipartite entangled state.  Consequently, for every bipartite entangled state one can now observe a form of quantum steering. 
\end{abstract}


\maketitle

{\section{Introduction}} Entangled quantum states are a class of quantum states that can not expressed as a mixture of product states. Nonlocality, on the other hand, points to the fact that there exist correlations between space-like separated systems that cannot be explained by classical physics. An asymmetric form of quantum nonlocality is known as quantum steering which essentially captures the fact that two distant quantum systems can influence each other's state \cite{Schrod}. The standard quantum steering scenario consists of two parties among which one is trusted in the sense that the measurements of this party are precisely known \cite{Wiseman}. From an application perspective, quantum steering serves as a resource for various quantum information tasks such as key distribution \cite{bran,Tomamichel_2013,masini}, randomness certification \cite{paul1,sarkar11}, self-testing of quantum states and measurements \cite{Supic, sarkar6,sarkar11,sarkar12}, and more.

To observe quantum steering, one requires entangled quantum systems. Consequently, quantum steering implies the presence of entanglement. It is well-established now that there are entangled states that can not demonstrate quantum steering in the standard scenario \cite{Uola_2020}. For instance, Werner state beyond a certain parameter value is proven to be unsteerable even if it is entangled \cite{Bowles1}.
A major problem in this regard concerns whether the steerability of every entangled state can be activated in some manner. It was demonstrated in \cite{act1} that certain unsteerable entangled states can become steerable when multiple copies of the state are considered. Consequently, this indicates that if one considers quantum networks which employ multiple sources, one might be able to demonstrate the activation of quantum steering for every entangled state.

Recently, a form of quantum steering in networks \cite{netstee}, termed swap-steering, was introduced in \cite{sarkar15}. The scenario consisted of two spatially separated parties, each performing a fixed measurement, with one of them being trusted. This is the minimal scenario to observe any form of network nonlocality. In this scenario, we first construct a family of witnesses to observe swap-steerability, the minimal form of network steerability, of every bipartite state with a negative partial transpose (NPT) \cite{Peres,Horodecki_1996}. Then, we construct witnesses of swap-steerability for a large class of states violating the computable cross-norm (CCN) criterion \cite{Yu_2005} which includes some bound entangled states. Finally, allowing the trusted party to perform tomography of the received subsystem, we show that if one can construct an entanglement witness of a particular bipartite entangled state, then one can straightforwardly construct a witness to observe swap-steerability. Consequently, every bipartite entangled state is network steerable.

Interestingly, one of the proposed inequalities allows us to observe an unbounded gap between network quantum steering and network unsteerable models. This is the first example of an unbounded gap between any form of network local and non-local correlations. Remarkably, this happens in the case without inputs for both parties, which is again the first of its kind. From a foundational point of view, this means that there is no way network unsteerable models can even approximate network steerable correlations. Moreover, in realistic settings, imperfections such as noise, finite statistics, and detector inefficiencies inevitably reduce the observed violation. A large gap provides a safety margin that ensures the measured correlations remain clearly incompatible with any network steerable and unsteerable models despite these imperfections. This improves statistical significance, reduces the precision and data requirements of the experiment, and enhances robustness against loopholes such as limited detection efficiency. Considering higher-dimensional systems allows to observe swap-steering in a very noisy setup, thus making it more experimental friendly. 

Before proceeding with the result, let us first introduce the relevant concepts required for the manuscript. 

{\section{Preliminaries}}
Let us briefly describe the swap-steering scenario introduced in \cite{sarkar15}. This setup involves two parties, Alice and Bob, who are situated in separate laboratories. Each of them receives two subsystems from two distinct and statistically independent sources, $S_1$ and $S_2$. They then perform a single measurement on their respective subsystems, with the outcomes labelled as $a, b$ for Alice and Bob, respectively [see Fig. \ref{fig1}]. In this scenario, Alice is considered trustworthy, meaning her measurement apparatus is well-calibrated and known. Alice and Bob repeat this experiment multiple times to generate the joint probability distribution, represented by $\vec{p} = {p(a, b)}$, where $p(a, b)$ denotes the probability of Alice and Bob obtaining the outcomes $a$ and $b$, respectively. Let us note here that in general the sources $S_1,S_2$ can be classically correlated as shown in \cite{sarkar15}. However, for simplicity, we assume that both sources are independent for our analysis all of which follow straightway to the scenario with classically correlated sources as all the proposed witnesses are linear over $\vec{p}$. 

The probabilities $p(a,b)$ can be computed in quantum theory as
\begin{eqnarray}
p(a,b)=\Tr\left[(M^a\otimes N^b)\rho_{A_1B_1}\otimes\rho_{A_2B_2}\right]
\end{eqnarray}
where $M^a,N^b$ denote the measurement elements of Alice and Bob which are positive and $\sum_aM^a=\sum_bN^b=1$. 
If the correlations are not swap-steerable, then one can always find a separable outcome-independent hidden state (SOHS) model to explain the observed statistics $\vec{p}$ \cite{sarkar15} given by
\begin{eqnarray}\label{SOHS2}
  p(a,b)= \sum_{\lambda_1,\lambda_2} p(\lambda_1)p(\lambda_2)p(a|\ \rho_{\lambda_1}\otimes\rho_{\lambda_2})p(b|\lambda_1,\lambda_2)\ \ 
\end{eqnarray}
for all $a,b$. To observe swap-steering one can construct linear inequalities of the form
\begin{eqnarray}
    W=\sum_{a,b}c_{a,b}p(a,b) \leq\beta_{SOHS}
\end{eqnarray}
where $c_{a,b}$ are scalars
and $\beta_{SOHS}$ is the maximal value attainable using SOHS model.

In this work, we aim to find linear inequalities in the above-described scenario, that can be used to witness swap-steerability for any bipartite entangled states. For this purpose, let us describe two necessary criteria for any bipartite state to be entangled. The first one is the well-known positive partial transpose (PPT) criterion \cite{Peres,Horodecki_1996} which can be simply stated as any separable state must have a positive partial transpose. Consequently, quantum states with negative partial transpose (NPT) are guaranteed to be entangled. This further allows one to construct entanglement witnesses corresponding to each of the NPT states as
\begin{eqnarray}\label{npt}
W_{NPT}=\Tr(\proj{\eta}^{T_A}_{AB}\ \rho_{AB})
\end{eqnarray}
where $T_A$ denotes the partial transpose over subsystem $A$ and $\ket{\eta}$ is the eigenvector of $\rho_{AB}^{T_A}$ with a negative eigenvalue. Consequently, $W_{NPT}\geq 0$ for separable states and $W_{NPT}< 0$ for the corresponding entangled state. 

The second criterion is known as the computable cross-norm (CCN) criterion \cite{Yu_2005}. To describe the CCN criterion, we express any bipartite density matrix using the Schmidt decomposition in the operator space as
\begin{eqnarray}
    \rho_{AB}=\sum_{i=0}^{d^2-1}\lambda_iF_{i,A}\otimes G_{i,B}
\end{eqnarray}
where $\lambda_i\geq 0$ and $\{F_{i,A}\},\{G_{i,B}\}$ are orthonormal bases of the operator space of $\mathcal{H}_A, \mathcal{H}_B$ with $\min\{\mathrm{dim}(\mathcal{H}_A),\mathrm{dim}(\mathcal{H}_B)\}=d$, that is, $\Tr(F_{i}F_j)=\Tr(G_{i}G_j)=\delta_{i,j}$ such that $F_i,G_i$ are hermitian. An example of such a basis for operators acting on $\mathbb{C}^d$ is given by $\{\mathcal{J}_m,\mathcal{J}_{m,n}^{\pm}\}$ with $\mathcal{J}_m=\proj{m}$
\begin{eqnarray}\label{maxentccn1}
\mathcal{J}_{m,n}^{+}=\frac{\ket{m}\!\bra{n}+\ket{n}\!\bra{m}}{\sqrt{2}},\quad \mathcal{J}_{m,n}^{-}=\frac{\ket{m}\!\bra{n}-\ket{n}\!\bra{m}}{i\sqrt{2}}
\end{eqnarray}
for $m<n$ such that $m,n=0,1,\ldots,d-1$. The maximally entangled state of two-qudits $\ket{\phi^+_d}=1/\sqrt{d}(\sum_i\ket{ii})$ can be represented using this bases as
\begin{eqnarray}\label{maxentccn}
    \proj{\phi^+_d}=\frac{1}{d}\sum_{m=0}^{d-1} \mathcal{J}_m\otimes \mathcal{J}_m+\frac{1}{d}\sum_{\substack{m,n=0\\m<n}}^{d-1}\mathcal{J}_{m,n}^{\pm}\otimes (\mathcal{J}_{m,n}^{\pm})^T.\ \ \ 
\end{eqnarray}
Now, the CCN criterion states that for any separable state $\sum_{i=0}^{d^2-1}\lambda_i\leq1.$ Consequently, quantum states that violate this criterion are entangled. Both of the above-described criteria are necessary for quantum states to be entangled. For other necessary conditions, refer to \cite{G_hne_2009}. Let us now construct witnesses that can detect swap-steerability of entangled states that violate the above conditions.

\begin{figure}[t]
\includegraphics[width=\linewidth]{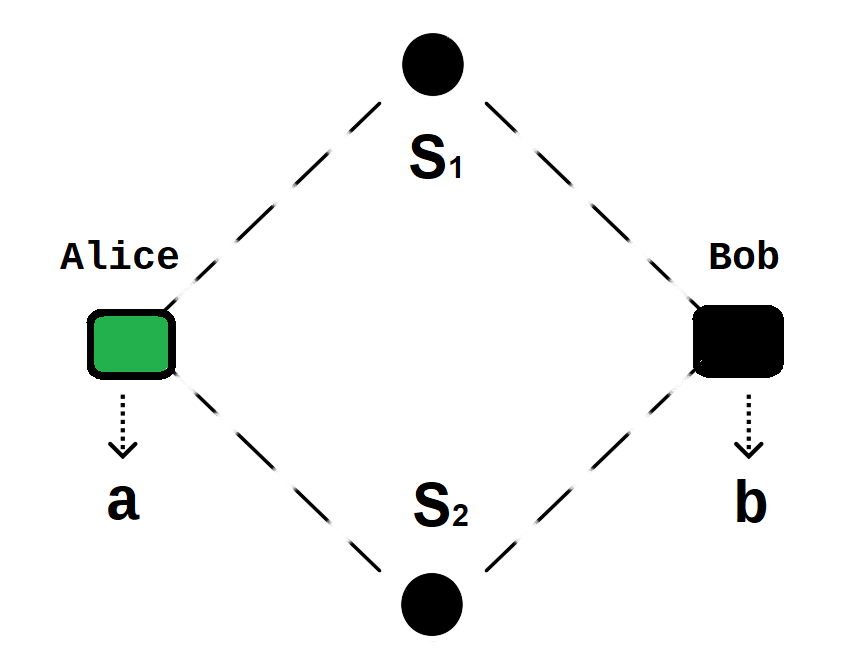}
    \caption{Swap-steering scenario. Alice and Bob, situated apart with no communication between them, each receives two subsystems from sources $S_1$ and $S_2$. They independently perform a single measurement on these subsystems with Alice deemed trustworthy. }
    \label{fig1}
\end{figure}

{\section{NPT states}} Consider again the scenario depicted in Fig. \ref{fig1} with Bob performing a two-outcome measurement $b=0,1$ and Alice being trusted performs the  $d^2-$outcome measurement $U_{A_1}^{\dagger}\mathcal{A}_{NPT}U_{A_1}$ where $U_{A_1}$ is a unitary operation (described below) and $\mathcal{A}_{NPT}=\{\proj{\delta_{m}},\proj{\delta^{\pm}_{m,n}}\}$ with $\ket{\delta_{m}}=\ket{m}_{A_1}\ket{m}_{A_2}$ and 
\begin{eqnarray}\label{ANPT}
\ket{\delta_{m,n}^{\pm}}=\frac{\ket{m}_{A_1}\ket{n}_{A_2}\pm\ket{n}_{A_1}\ket{m}_{A_2}}{\sqrt{2}}
\end{eqnarray}
such that $m,n=0,1,\ldots,d-1$ with $m<n$. For simplicity, Alice's outcome will be denoted here as $a=m$ and $a=(\pm,m,n)$. 

Suppose now that we want to witness swap-steerability of an NPT state $\rho_{A_1B_1}$ with $\min\{\mathrm{dim}(\mathcal{H}_{A_1}),\mathrm{dim}(\mathcal{H}_{B_1})\}=d$. As discussed earlier, we consider the eigenstate $\proj{\eta}$ associated with a negative eigenvalue of $\rho_{A_1B_1}^{T_{A_1}}.$ As $\ket{\eta}\in \mathbb{C}^d\otimes\mathbb{C}^d$ is a bipartite state, it can be expressed using Schmidt decomposition as $\ket{\eta}=\sum_{i=0}^{d-1}\alpha_i\ket{e_i}_{A_1}\ket{f_i}_{B_1}$. As $\{\ket{e_i}\},\{f_i\}$ are orthonormal bases, we have   $\ket{\tilde{\eta}}=U_{A_1}\otimes V_{B_1}\ket{\eta}=\sum_{i=0}^{d-1}\alpha_i\ket{i}_{A_1}\ket{i}_{B_1}$ where $U\ket{e_i}=\ket{i}$ and $V\ket{f_i}=\ket{i}$. 

Now, the corresponding swap-steering inequality is given by $\mathcal{S}_{\rho:NPT}\leq\beta_{SOHS}$ where
\begin{eqnarray}\label{Witnpt}
    \mathcal{S}_{\rho:NPT}=\sum_{\substack{m,n=0\\m<n}}^{d-1}\alpha_m\alpha_n\Big[p((-,m,n),0)-p((+,m,n),0)\Big]\nonumber\\-\sum_{m=0}^{d-1}\alpha_m^2p(m,0).\qquad\qquad\qquad
\end{eqnarray}
Let us recall that Alice's outcomes are denoted here as $a=m$ and $a\equiv(\pm,m,n)$ [see Eq. \eqref{ANPT}] and thus $p((\pm,m,n),0)$ signifies the probability of obtaining outcome $(\pm,m,n)$ by Alice and $0$ by Bob.
Consequently, we establish the following result for the above-suggested inequality.
\begin{fakt}
    Consider the swap-steering scenario depicted in Fig. \ref{fig1} and the functional $\mathcal{S}_{\rho:NPT}$ \eqref{Witnpt}. The maximal value attainable of $\mathcal{S}_{\rho:NPT}$ using an SOHS model is $\beta_{SOHS}=0$. 
\end{fakt}

The proof of the above fact can be found in Appendix A. Consider now that the source $S_1$ generates the state $\rho_{A_1B_1}$ and $S_2$ generates the maximally entangled state $\ket{\phi^+_d}_{A_2B_2}$ with Bob performing the measurement $\{M_0,\I-M_0\}$ where 
$M_0=V_{B_1}^{\dagger}\proj{\phi^+_d}_{B_1B_2}V_{B_1}$. It is straightforward to see using entanglement swapping that when Bob obtains outcome $0$, the post-measurement state at Alice's side is given by $\sigma_{A_1A_2}^0= V_{A_2}\rho_{A_1A_2}V_{A_2}^{\dagger}$ that occurs with probability $1/d^2$. Now, we observe from \eqref{Witnpt} that 
\begin{eqnarray}\label{Witnip2}
\mathcal{S}_{\rho:NPT}&=&\sum_{\substack{m,n=0\\m<n}}^{d-1}\frac{\alpha_m\alpha_n}{d^2}\Tr\Big[(\proj{\delta_{m,n}^-}-\proj{\delta^+_{m,n}})\tilde{\rho}\Big]\nonumber\\&-&\sum_{m=0}^{d-1}\frac{\alpha_m^2}{d^2}\Tr[\proj{\delta_m}\tilde{\rho}]
\end{eqnarray}
where $\tilde{\rho}_{A_1A_2}=U_{A_1}\otimes V_{A_2}\rho_{A_1A_2}U_{A_1}^{\dagger}\otimes V_{A_2}^{\dagger}.$
Let us now observe that [see Appendix for details]
\begin{eqnarray}
    \proj{\tilde{\eta}}^{T_A}&=&\sum_{\substack{m,n=0\\m<n}}^{d-1}\alpha_m\alpha_n(\proj{\delta_{m,n}^+}-\proj{\delta^-_{m,n}})\nonumber\\&+&\sum_{m=0}^{d-1}\alpha_m^2\proj{\delta_m}
\end{eqnarray}
which allows us to conclude from \eqref{Witnip2} that
\begin{eqnarray}
    \mathcal{S}_{\rho:NPT}=-\frac{1}{d^2}\Tr(\proj{\eta}^{T_A}\rho)>0.
\end{eqnarray}
Consequently, any NPT state is swap-steerable. 

{\section{States violating CCN criterion---}} Suppose now that we want to witness swap-steerability of a state $\rho_{A_1B_1}$ with $\min\{\mathrm{dim}(\mathcal{H}_A),\mathrm{dim}(\mathcal{H}_B)\}=d$ that violates the CCN criterion, that is, we can express the state $\rho_{A_1B_1}$ as
\begin{equation}
    \rho_{A_1B_1}=\sum_{m=0}^{d-1} \lambda_mF_m\otimes G_m+\sum_{\substack{m,n=0\\m<n}}^{d-1}\lambda_{\pm,m,n}F_{m,n}^{\pm}\otimes G_{m,n}^{\pm}
\end{equation}
with $\lambda_m,\lambda_{\pm,m,n}\geq0$ and $\sum_{m=0}^{d-1}\lambda_m+\sum_{\substack{m,n=0\\m<n}}^{d-1}\lambda_{\pm,m,n}>1$ and $\{F_m,F_{m,n}^{\pm}\},\{G_m,G_{m,n}^{\pm}\}$ are set of orthonormal bases on $\mathcal{H}_A,\mathcal{H}_B$ respectively. Let us restrict to states that can also be expressed up to local unitaries as
\begin{eqnarray}\label{stateccn}
    U'_{A_1}\otimes V'_{B_1} &\rho_{A_1B_1}& U_{A_1}^{'\dagger}\otimes V_{B_1}^{'\dagger}=\sum_{m=0}^{d-1} \lambda_m\mathcal{J}_m\otimes \mathcal{J}_m\nonumber\\&+&\sum_{\substack{m,n=0\\m<n}}^{d-1}\lambda_{\pm,m,n}\mathcal{J}_{m,n}^{\pm}\otimes (\mathcal{J}_{m,n}^{\pm})^T
\end{eqnarray}
where $U'_{A_1}F_{m,n}U_{A_1}^{\dagger}=\mathcal{J}_{m,n}$, $V'_{A_1}G_{m,n}V_{A_1}^{\dagger}=\mathcal{J}_{m,n}^T$,
$U'_{A_1}F_{m}U_{A_1}^{\dagger}=\mathcal{J}_{m}$, and $V'_{A_1}G_{m}V_{A_1}^{\dagger}=\mathcal{J}_{m}$ for all $m,n$ and $\mathcal{J}_{m},\mathcal{J}^{\pm}_{m,n}$ are given in \eqref{maxentccn1}. 

Consider again the scenario depicted in Fig. \ref{fig1} with Alice and Bob performing $d^2$-outcome measurement $a,b=l_1l_2$ with $l_1,l_2=0,1,\ldots,d-1$. Alice being trusted performs the Bell-state measurement up to local unitary given by $U_{A_1}^{'\dagger}\mathcal{A}_{BM}U'_{A_1}$ where $U'_{A_1}$ is a unitary operation (described above) and $\mathcal{A}_{BM}=\{\proj{\phi_{d,l_1l_2}^+}\}$ with 
\begin{eqnarray}\label{maxentbasis}
\ket{\phi_{d,l_1l_2}^+}=\left(X^{l_2}_dZ^{l_1}_d\otimes\I\right)\frac{1}{\sqrt{d}}\sum_i\ket{i}\ket{i}.
\end{eqnarray}
such that $Z_d=\sum_{i=0}^{d-1}\omega_d^i\proj{i},X_d=\sum_{i=0}^{d-1}\ket{i+1}\!\bra{i}$ with $\omega_d=e^{2\pi i/d}$. Now, the corresponding swap-steering inequality is given by $\mathcal{S}_{\rho:CCN}\leq\beta_{SOHS}$ where
\begin{eqnarray}\label{Witccn}
    \mathcal{S}_{\rho:CCN}&=&\sum_{l_1,l_2=0}^{d-1}p(l_1l_2,l_1l_2).
\end{eqnarray}
Let us observe that the above inequality \eqref{Witccn} is the $d^2-$outcome generalisation of the swap-steering inequality of \cite{sarkar15}.
Consequently, we establish the following result for the above-suggested inequality.
\begin{fakt}
    Consider the swap-steering scenario depicted in Fig. \ref{fig1} and the functional $\mathcal{S}_{\rho:CCN}$ \eqref{Witccn}. The maximal value attainable of $\mathcal{S}_{\rho:CCN}$ using an SOHS model is $\beta_{SOHS}=\frac{1}{d}$. 
\end{fakt}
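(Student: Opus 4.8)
The plan is to reduce the supremum over all SOHS models to the maximal overlap of a single product state with the generalized Bell basis, and then evaluate that overlap explicitly. Inserting Alice's trusted projective measurement, whose elements are $(U_{A_1}^{'\dagger}\otimes\I)\proj{\phi_{d,l_1l_2}^+}(U'_{A_1}\otimes\I)$, into the SOHS decomposition \eqref{SOHS2} gives
\begin{equation}
p(a=l_1l_2\mid\rho_{\lambda_1}\otimes\rho_{\lambda_2})=\Tr[\proj{\phi_{d,l_1l_2}^+}(\tilde\rho_{\lambda_1}\otimes\rho_{\lambda_2})],
\end{equation}
where $\tilde\rho_{\lambda_1}=U'_{A_1}\rho_{\lambda_1}U_{A_1}^{'\dagger}$ is again an arbitrary density matrix, so the fixed unitary $U'_{A_1}$ plays no role in the optimization. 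Bob's response $p(b=l_1l_2\mid\lambda_1,\lambda_2)$, by contrast, is entirely unconstrained.

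Next I would exploit Bob's freedom together with convexity. For each hidden pair $(\lambda_1,\lambda_2)$ the numbers $\{p(b=l_1l_2\mid\lambda_1,\lambda_2)\}_{l_1,l_2}$ form a probability distribution over $(l_1,l_2)$, so the corresponding slice of \eqref{Witccn} is a convex combination of Alice's probabilities and is bounded by their largest value. Averaging this against the normalized weights $p(\lambda_1)p(\lambda_2)$ yields
\begin{equation}
\mathcal{S}_{\rho:CCN}\le \max_{\rho_1,\rho_2}\ \max_{l_1,l_2}\ \Tr[\proj{\phi_{d,l_1l_2}^+}(\rho_1\otimes\rho_2)],
\end{equation}
where the outer maximum runs over all product states $\rho_1\otimes\rho_2$ on $\mathbb{C}^d\otimes\mathbb{C}^d$.

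The key step is to evaluate this overlap. By the covariance $\ket{\phi_{d,l_1l_2}^+}=(X_d^{l_2}Z_d^{l_1}\otimes\I)\ket{\phi_d^+}$ from \eqref{maxentbasis}, conjugating by the Weyl operator merely relabels the (arbitrary) first factor, so every term equals $\Tr[\proj{\phi_d^+}(\rho_1\otimes\rho_2)]$ independently of $(l_1,l_2)$. Expanding $\proj{\phi_d^+}$ in the computational basis gives the identity
\begin{equation}
\Tr[\proj{\phi_d^+}(\rho_1\otimes\rho_2)]=\frac{1}{d}\,\Tr[\rho_1\rho_2^{T}],
\end{equation}
and since $\rho_2^{T}$ is again a density matrix we have $\Tr[\rho_1\rho_2^{T}]\le 1$, with equality when $\rho_1=\rho_2^{T}$ is a rank-one projector. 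This is the familiar statement that the largest fidelity between $\proj{\phi_d^+}$ and a separable state equals its maximal squared Schmidt coefficient $1/d$; hence every overlap is at most $1/d$ and $\mathcal{S}_{\rho:CCN}\le 1/d$.

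Finally I would confirm tightness by exhibiting an explicit SOHS model that saturates the bound: take a single deterministic hidden variable, let Bob always announce $(l_1,l_2)=(0,0)$, and choose the hidden states so that $\tilde\rho_{\lambda_1}=\proj{\psi}$ and $\rho_{\lambda_2}=\proj{\psi^{*}}$ for any $\ket{\psi}\in\mathbb{C}^d$; then only $p(00,00)=\Tr[\proj{\phi_d^+}(\proj{\psi}\otimes\proj{\psi^{*}})]=1/d$ contributes, so $\mathcal{S}_{\rho:CCN}=1/d$ and $\beta_{SOHS}=1/d$. I expect the main obstacle to lie not in the overlap computation---which is routine once the Weyl covariance is invoked---but in the reduction itself: one must argue cleanly that Bob's unconstrained response, combined with the convexity of the hidden-variable average, collapses the entire optimization to a single product-state overlap, and that the fixed unitary $U'_{A_1}$ can always be absorbed into the free hidden states.
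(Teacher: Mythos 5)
Your proof is correct and follows essentially the same route as the paper: reduce the SOHS optimization, via Bob's normalized response and convexity, to the maximal overlap of a product state with Alice's (rotated) Bell-basis elements, evaluate that overlap as $\frac{1}{d}\Tr[\rho_1\rho_2^{T}]\leq\frac{1}{d}$, and exhibit a saturating model. The only differences are cosmetic: you make the overlap computation explicit where the paper merely asserts it, and your saturating example (deterministic Bob, pure hidden states) differs from the paper's (maximally mixed correlated sources with Bob measuring in the computational basis), but both are valid.
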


The proof of the above fact can be found in Appendix A.
Consider now that the source $S_1$ generates the state $\rho_{A_1B_1}$ and $S_2$ generates the maximally entangled state $\ket{\phi^+_d}_{A_2B_2}$ with Bob performing the Bell state measurement up to local unitary $V_{B_1}^{'\dagger}\mathcal{A}_{BM}^{*}V'_{B_1}$ where $\mathcal{A}_{BM}$ is given in \eqref{maxentbasis} and $*$ denotes complex conjugation. It is straightforward to see using entanglement swapping that when Bob obtains outcome $l_1l_2$, the post-measurement state at Alice's side is given by $\sigma_{A_1A_2}^{l_1l_2}= [\I_{A_1}\otimes(X^{l_2}_dZ^{l_1}_d)^TV'_{A_2}]\rho_{A_1A_2}[\I_{A_1}\otimes V^{'\dagger}(X^{l_2}_dZ^{l_1}_d)^*_{A_2}]$ that occurs with probability $1/d^2$. Now, we observe from \eqref{Witccn} that 
\begin{eqnarray}\label{Witnip21}
\mathcal{S}_{\rho:CCN}&=&\frac{1}{d^2}\sum_{l_1,l_2=0}^{d-1}\Tr\left(\sigma_{A_1A_2}^{l_1l_2}U_{A_1}^{'\dagger}\proj{\phi_{d,l_1l_2}^+}U'_{A_1}\right)\nonumber\\
&=&\Tr(\tilde{\rho}_{A_1A_2}\proj{\phi^+_d})
\end{eqnarray}
where $\tilde{\rho}_{A_1A_2}=U_{A_1}\otimes V_{A_2}\rho_{A_1A_2}U_{A_1}^{\dagger}\otimes V_{A_2}^{\dagger}$ and to obtain the second line we used the fact in \eqref{maxentbasis} that $R\otimes\I\ket{\phi^+_d}=\I\otimes R^T\ket{\phi^+_d}$. Now, using the Schmidt decomposition of $\tilde{\rho}_{A_1A_2}$ from \eqref{stateccn} and $\proj{\phi^+_d}$ from \eqref{maxentccn} and then using the fact that all the basis elements in the representation are orthonormal, we obtain that
\begin{eqnarray}
    \mathcal{S}_{\rho:CCN}=\frac{1}{d}\sum_{m=0}^{d-1}{\lambda_m}+\frac{1}{d}\sum_{\substack{m,n=0\\m<n}}^{d-1}\lambda_{\pm,m,n}> \frac{1}{d}.
\end{eqnarray}
Let us now observe an interesting property of the swap-steering inequality $\mathcal{S}_{\rho:CCN}$ \eqref{Witccn}.

{\it{Unbounded gap to witness swap-steerability---}} Consider again the expression \eqref{Witnip21} and observe that if $\tilde{\rho}=\proj{\phi^+_d}$, then $\mathcal{S}_{\rho:CCN}=1$. Moreover, for every dimension $d$, $0\leq\mathcal{S}_{\rho:CCN}\leq1$. Thus, the ratio of the maximal quantum value to the maximal value that one can obtain via local or SOHS model scales up as $d$. This is particularly important for experiments as noise increases with dimensions but the gap between the local and quantum values also increases with $d$. From a fundamental perspective, this further points to the fact that there exist operational tasks without inputs in which quantum strategies scalably outperform local ones. 

{\section{Witness for any bipartite entangled state}} Let us now generalise the above scenario for witnessing swap-steering for arbitrary bipartite entangled state. For this purpose, let us recall from the Hahn-Banach separation theorem that there exists a witness $W_{\tilde{\rho}}$ for every entangled state $\rho$ such that $\Tr(W_{\tilde{\rho}}\sigma)\geq0$ where $\sigma$ denotes any separable state and $\Tr(W_{\tilde{\rho}}\tilde{\rho})<0$. For instance, Eq. \eqref{npt} is a witness for every NPT state. Consider now that $\tilde{\rho}$ acting on $\mathbb{C}^{d'}\otimes\mathbb{C}^{d''}$ which can be embedded in $\mathbb{C}^{d}\otimes\mathbb{C}^{d}$ where $d=\max\{d,d'\}$. Consequently, $W_{\tilde{\rho}}$ can be expressed in the Heisenberg-Weyl (HW) basis $\{\omega_{d^2}^{ij(d^2-1)/2}X^i_{d^2}Z^j_{d^2}\}$ for $i,j=0,1,\ldots,d^2-1$ as $W_{\tilde{\rho}}=\sum_{i,j=0}^{d^2-1}\lambda_{i,j}\omega_{d^2}^{ij(d^2-1)/2}X^i_{d^2}Z^j_{d^2}$. Notice that we add factors in front of the standard HW basis such that each of them is a proper observable with eigenvalues as powers of $\omega$. As $W_{\tilde{\rho}}$ is hermitian, we further obtain that $\omega_{d^2}^{ij}\lambda_{i,j}^{*}=\lambda_{j,i}$.

Let us now consider the weakened version of the swap-steering scenario depicted in Fig. \ref{fig1} such that the trusted party can perform $d^2+1$ number of $d^2-$outcome measurements rather than just a single one written in the observable form as $\{A_{01},A_{10},A_{11},\ldots,A_{1d^2-1}\}$ where $A_{01}=Z_{d^2}$ and $A_{1k}=\omega_{d^2}^{k(d^2-1)/2}X_{d^2}Z^k_{d^2}\ (k=0,\ldots,d^2-1)$. In fact, these measurements can be used to construct the full tomographically complete set of measurements on Alice's side \cite{sarkar12}.  However, the untrusted party performs a single two-outcome measurement. Consequently, in this scenario, we propose the following inequality:
\begin{equation}\label{univwit}
\mathcal{S}_{\tilde{\rho}}=c_{0,0}p_B(0)+\sum_{a=0}^{d^2-1}c_{a,01}p(a,0|01)+\sum_{a,k=0}^{d^2-1}c_{a,1k}p(a,0|1k)
\end{equation}
where $p_B(0)$ is the local probability of Bob's $0-$outcome and $c_{0,0}=-\lambda_{0,0},c_{a,01}=-\sum_{j=1}^{d^2-1}\lambda_{0,j}\omega^{aj}_{d^2}$ and $ c_{a,1k}=-\sum_{j=1}^{d^2-1}\lambda_{k,kj\oplus d^2}\omega^{aj}_{d^2}$ where $\{\lambda_{i,j}\}$ are coefficients of $W_{\tilde{\rho}}$ as described above and thus $c_{i,j}'s$ are real. Let us now compute the local bound $\beta_{SOHS}$ of $\mathcal{S}_{\tilde{\rho}}$ \eqref{univwit}.

\begin{fakt}\label{fact3}
    Consider the swap-steering scenario depicted in Fig. \ref{fig1} and the functional $\mathcal{S}_{\tilde{\rho}}$ \eqref{univwit}. The maximal value attainable of $\mathcal{S}_{\tilde{\rho}}$ using an SOHS model is $\beta_{SOHS}=0$. 
\end{fakt}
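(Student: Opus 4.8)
The plan is to show that every SOHS model forces $\mathcal{S}_{\tilde{\rho}}\leq 0$ by reducing the functional to a nonnegatively weighted average of witness expectations on product states, and then invoke positivity of the entanglement witness on separable states. First I would write out the SOHS decomposition in the present weakened scenario, where Alice now carries the input $x\in\{01,10,\ldots,1(d^2-1)\}$ while Bob retains a single binary measurement. Fixing Bob's outcome $b=0$, the model reads $p(a,0|x)=\sum_{\lambda_1,\lambda_2}p(\lambda_1)p(\lambda_2)\,\Tr[A_x^a(\rho_{\lambda_1}\otimes\rho_{\lambda_2})]\,p(0|\lambda_1,\lambda_2)$ and $p_B(0)=\sum_{\lambda_1,\lambda_2}p(\lambda_1)p(\lambda_2)\,p(0|\lambda_1,\lambda_2)$, where the $A_x^a$ are Alice's known spectral projectors. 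Introducing the nonnegative weights $q_{\lambda_1\lambda_2}=p(\lambda_1)p(\lambda_2)p(0|\lambda_1,\lambda_2)$ and the product states $\sigma_\lambda=\rho_{\lambda_1}\otimes\rho_{\lambda_2}$, every probability entering $\mathcal{S}_{\tilde{\rho}}$ in \eqref{univwit} is linear in the $\Tr[A_x^a\sigma_\lambda]$, so I can pull the $\lambda$-sum outside to obtain $\mathcal{S}_{\tilde{\rho}}=\sum_{\lambda_1,\lambda_2}q_{\lambda_1\lambda_2}\,B(\sigma_\lambda)$, with the single-state functional $B(\sigma)=c_{0,0}+\sum_a c_{a,01}\Tr[A_{01}^a\sigma]+\sum_{a,k}c_{a,1k}\Tr[A_{1k}^a\sigma]$.

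The heart of the argument is the reconstruction identity $B(\sigma)=-\Tr(W_{\tilde{\rho}}\sigma)$ for every state $\sigma$ on $\mathbb{C}^{d^2}$. The mechanism is discrete Fourier inversion on the powers of the clock--shift observables: since each $A_x$ is a proper observable with eigenvalues $\omega_{d^2}^a$, its spectral projectors satisfy $\sum_a\omega_{d^2}^{aj}A_x^a=A_x^j$, so the inner sums collapse into $\sum_a c_{a,01}\Tr[A_{01}^a\sigma]=-\sum_{j=1}^{d^2-1}\lambda_{0,j}\Tr[Z_{d^2}^j\sigma]$ and $\sum_a c_{a,1k}\Tr[A_{1k}^a\sigma]=-\sum_{j=1}^{d^2-1}\lambda_{k,kj\oplus d^2}\Tr[A_{1k}^j\sigma]$. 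I would then use $A_{01}=Z_{d^2}$ to reproduce exactly the $X^0Z^l$ rows of the Heisenberg--Weyl expansion of $W_{\tilde{\rho}}$, the $c_{0,0}$ term supplying the $\lambda_{0,0}$ piece through $\Tr\sigma=1$, and expand $A_{1k}^j=\omega_{d^2}^{jk(d^2-1)/2}(X_{d^2}Z_{d^2}^k)^j\propto X_{d^2}^jZ_{d^2}^{kj\oplus d^2}$ to reach the remaining terms with nonzero $X$-power. The coefficients $c_{a,1k}$ are designed so that, once the HW phases $\omega_{d^2}^{il(d^2-1)/2}$ generated by $(X_{d^2}Z_{d^2}^k)^j$ are accounted for and the double sum over $(k,j)$ is relabelled to the HW index $(i,l)$, the $\lambda$'s reassemble exactly $\sum_{i,l}\lambda_{i,l}\,\omega_{d^2}^{il(d^2-1)/2}\Tr(X_{d^2}^iZ_{d^2}^l\sigma)=\Tr(W_{\tilde{\rho}}\sigma)$, whence $B(\sigma)=-\Tr(W_{\tilde{\rho}}\sigma)$.

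With this identity the conclusion is immediate: $\mathcal{S}_{\tilde{\rho}}=-\sum_{\lambda_1,\lambda_2}q_{\lambda_1\lambda_2}\Tr(W_{\tilde{\rho}}\sigma_\lambda)$, and since each $\sigma_\lambda=\rho_{\lambda_1}\otimes\rho_{\lambda_2}$ is a product, hence separable, state, the defining property $\Tr(W_{\tilde{\rho}}\sigma)\geq0$ of the entanglement witness forces every summand to be $\leq0$; as the weights obey $q_{\lambda_1\lambda_2}\geq0$, we get $\mathcal{S}_{\tilde{\rho}}\leq0$ over all SOHS models. The bound is saturated, for instance by any model in which Bob never outputs $0$ (all $q_{\lambda_1\lambda_2}=0$) or in which each $\sigma_\lambda$ lies on the zero set of the witness, so $\beta_{SOHS}=0$.

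I expect the reconstruction identity of the second paragraph to be the main obstacle: the positivity and convexity steps are routine, but matching the HW phase convention $\omega_{d^2}^{il(d^2-1)/2}$ against the phases produced by $(X_{d^2}Z_{d^2}^k)^j$ and controlling the modular index map requires care. In particular one must verify that, together with $Z_{d^2}$, the powers of the $d^2$ observables $A_{1k}$ reach every HW operator $X_{d^2}^iZ_{d^2}^l$ exactly once, i.e.\ the tomographic completeness invoked after Eq.~\eqref{univwit}, so that no $\lambda_{i,l}$ is double-counted or omitted. Checking this book-keeping, including the $\omega_{d^2}^{d^2}=1$ cancellations that make each $A_{1k}$ a genuine $\omega_{d^2}$-valued observable, is where the real work lies.
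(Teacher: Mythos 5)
Your proposal is correct and follows essentially the same route as the paper: decompose the SOHS correlations into a nonnegatively weighted sum of a single-state functional on product states, use the Fourier relation $A_x^j=\sum_a\omega_{d^2}^{aj}A_x^a$ to reassemble the coefficients $c_{0,0},c_{a,01},c_{a,1k}$ into $-\Tr(W_{\tilde{\rho}}\,\rho_{\lambda_1}\otimes\rho_{\lambda_2})$, and invoke positivity of the entanglement witness on separable states. The phase/index book-keeping you flag as the remaining work is likewise left implicit in the paper's own proof, and your added remark on saturation of the bound is a harmless supplement.
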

The proof of the above fact can be found in Appendix A. Consider now that the source $S_1$ generates the state $\tilde{\rho}_{A_1B_1}$ and $S_2$ generates the maximally entangled state $\ket{\phi^+_d}_{A_2B_2}$ with Bob performing the measurement $\{M_0,\I-M_0\}$ where 
$M_0=\proj{\phi^+_d}_{B_1B_2}$. It is straightforward to see using entanglement swapping that when Bob obtains outcome $0$, the post-measurement state at Alice's side is given by $\sigma_{A_1A_2}^0= \tilde{\rho}_{A_1A_2}$ that occurs with probability $1/d^2$. Now, we observe from \eqref{univwit} that can be simplified to [see proof of Fact \ref{fact3}] 
\begin{eqnarray}
    \mathcal{S}_{\tilde{\rho}}=-\frac{1}{d^2}\Tr(W_{\tilde{\rho}}\tilde{\rho})>0.
\end{eqnarray}
Consequently, any bipartite entangled state is swap-steerable. 

{\section{Discussions}} In the above work, we first constructed a family of witnesses to detect swap-steerability for every NPT bipartite state. Additionally, we constructed swap-steerability witnesses for a large class of states that violate the CCN criterion. Remarkably, one of our proposed inequalities demonstrates an unbounded gap between network steerable and unsteerable correlations. This is, in fact, the first instance where such an unbounded gap could be established between any form of network local and nonlocal correlations, that too  without inputs from either party.  Moreover, by allowing the trusted party to perform tomography on the received subsystem, we showed that an entanglement witness for a particular bipartite entangled state can be directly adapted to construct a witness for swap-steerability. Notice that to observe swap-steering, that is, to violate the above proposed witnesses, we consider that one of the states generated by the sources is maximally entangled. To ensure that the observed steerability is due to both states and not just due to the chosen maximally entangled state, we show in Appendix B that if either state is separable, then the proposed inequalities can not be violated. Consequently, entanglement in both sources is necessary to observe swap-steering [see Ref. \cite{sarkar15} for a more general condition]. It should be noted that from an experimental point of view, the proposed witnesses are easy to implement. For a fixed local dimension, the experimental setup requires performing tomography on one side and post-selecting a single outcome of a Bell measurement, both of which can be realized using linear optics. This is enough to observe swap steerability of any bipartite quantum state.

Let us further interpret the above activation of quantum steering from a phenomenological point of view. Let us suppose that in the standard quantum steering scenario, the trusted party observes a change in the state of its local system based on the input-output of the distant party. For any entangled state generated by the source, the trusted party would observe such an effect. However, if such a change for any input-output of the distant party can also be explained via some local hidden state model (LHS), then the state is deemed unsteerable in this scenario. Contrary to this, in the swap-steering scenario considered above, the trusted party detects whether its local state is entangled or not. Such an entanglement can never occur between two subsystems, if they can be individually described by an LHS model, regardless of any operation of the distant party. Consequently, there is a one-to-one correspondence between entanglement and network steerability. 

Our work raises several follow-up problems. The most important among them would be to extend the above swap-steering scenario to the multipartite regime and construct witnesses for every entangled state. The concept of swap-steering was recently generalised to ring networks in \cite{Baheti_2026}, which naturally motivates the exploration of its extension to other network topologies. Moreover, it would be interesting if every bipartite entangled state is swap-steerable even if the trusted party performs a single measurement. A more involving problem in this regard would be if one could remove the assumption of trust in the network and construct witnesses for every entangled state. 

{\it{Acknowledgments---}}
We thank Remiguisz Augusiak for discussions. This project was funded within the QuantERA II Programme (VERIqTAS project) that has received funding from the European Union’s Horizon 2020 research and innovation programme under Grant Agreement No 101017733 and the National Science Center, Poland, grant Opus 25, 2023/49/B/ST2/02468.

\providecommand{\noopsort}[1]{}\providecommand{\singleletter}[1]{#1}%

\onecolumngrid
\appendix

\section{Witnesses}
\setcounter{fakt}{0}
\begin{fakt}
    Consider the swap-steering scenario depicted in Fig. 1 of the manuscript and the functional $\mathcal{S}_{\rho:NPT}$ given by
    \begin{eqnarray}\label{Witnpt}
    \mathcal{S}_{\rho:NPT}=\sum_{\substack{m,n=0\\m<n}}^{d-1}\alpha_m\alpha_n\Big[p((-,m,n),0)-p((+,m,n),0)\Big]-\sum_{m=0}^{d-1}\alpha_m^2p(m,0)
\end{eqnarray}
with Alice performing the measurement given above Eq. 8 of the manuscript.
    The maximal value attainable of $\mathcal{S}_{\rho:NPT}$ using an SOHS model is $\beta_{SOHS}=0$. 
\end{fakt}
\begin{proof}
    Let us recall from Eq. 2 of the manuscript that for correlations admitting a SOHS model, we have that
    \begin{equation}
        p(a,b)= \sum_{\lambda_1,\lambda_2} p(\lambda_1)p(\lambda_2)p(a|\ \rho_{\lambda_1}\otimes\rho_{\lambda_2})p(b|\lambda_1,\lambda_2)
    \end{equation}
for all $a,b$. Consequently, we have from \eqref{Witnpt} that
\begin{eqnarray}\label{A2}
     \mathcal{S}_{\rho:NPT}&=&\sum_{\lambda_1,\lambda_2} p(\lambda_1)p(\lambda_2) \Gamma(\rho_{\lambda_1}\otimes\rho_{\lambda_2})p(0|\lambda_1,\lambda_2)
\end{eqnarray}
where 
\begin{eqnarray}\label{gamma}
    \Gamma(\rho_{\lambda_1}\otimes\rho_{\lambda_2})=\sum_{\substack{m,n=0\\m<n}}^{d-1}\alpha_m\alpha_n\Big[p(-,m,n|\rho_{\lambda_1}\otimes\rho_{\lambda_2})-p(+,m,n|\rho_{\lambda_1}\otimes\rho_{\lambda_2})\Big]-\sum_{m=0}^{d-1}\alpha_m^2p(m|\rho_{\lambda_1}\otimes\rho_{\lambda_2}).
\end{eqnarray}
Expanding the above formula \eqref{gamma}, by considering the measurement of Alice $U_{A_1}^{\dagger}\mathcal{A}_{NPT}U_{A_1}$, we obtain that
\begin{eqnarray}\label{gamma2}
\Gamma(\rho_{\lambda_1}\otimes\rho_{\lambda_2})&=&\sum_{\substack{m,n=0\\m<n}}^{d-1}
\Tr\Big[\alpha_m\alpha_n(\proj{\delta_{m,n}^+}-\proj{\delta_{m,n}^+})\tilde{\rho}_{\lambda_1}\otimes\rho_{\lambda_2}\Big]-\sum_{m=0}^{d-1}\Tr\Big[\alpha_m^2\proj{\delta_m}\tilde{\rho}_{\lambda_1}\otimes\rho_{\lambda_2}\Big]\nonumber\\
&=& -\sum_{\substack{m,n=0\\m<n}}^{d-1}
\Tr\Big[\alpha_m\alpha_n(\ket{mn}\!\bra{nm}+\ket{nm}\!\bra{mn})\tilde{\rho}_{\lambda_1}\otimes\rho_{\lambda_2}\Big]-\sum_{m=0}^{d-1}\Tr\Big[\alpha_m^2\proj{mm}\tilde{\rho}_{\lambda_1}\otimes\rho_{\lambda_2}\Big]
\end{eqnarray}
where $\tilde{\rho}_{\lambda_1}=U_{A_1}\rho_{\lambda_1}U_{A_1}^{\dagger}$. Let us now consider $\ket{\tilde{\eta}}=\sum_{i}\alpha_i\ket{ii}$ and observe that
\begin{eqnarray}
    \proj{\tilde{\eta}}^{T_A}=\sum_{i,j=0}^{d^2-1}\alpha_i\alpha_j\ket{ji}\!\bra{ij}&=&\sum_{i=0}^{d-1}\alpha_i\alpha_j\proj{jj}+\sum_{\substack{i,j=0\\i<j}}^{d-1}\alpha_i\alpha_j(\ket{ji}\!\bra{ij}+\ket{ij}\!\bra{ji}).
\end{eqnarray}
Consequently, from \eqref{gamma2}, we have that 
\begin{eqnarray}
    \Gamma(\rho_{\lambda_1}\otimes\rho_{\lambda_2})=-\Tr\Big[\proj{\tilde{\eta}}^{T_A}\tilde{\rho}_{\lambda_1}\otimes\rho_{\lambda_2}\Big]=-\Tr\Big[\proj{\tilde{\eta}}\tilde{\rho}_{\lambda_1}^{T}\otimes\rho_{\lambda_2}\Big]\leq0.
\end{eqnarray}
Thus, we have from \eqref{A2} that for correlations admitting a SOHS model
\begin{eqnarray}
     \mathcal{S}_{\rho:NPT}\leq0.
\end{eqnarray}
This concludes the proof.
\end{proof}

\begin{fakt}
    Consider the swap-steering scenario depicted in Fig. 1 of the manuscript and the functional $\mathcal{S}_{\rho:CCN}$
    \begin{eqnarray}\label{Witccn}
    \mathcal{S}_{\rho:CCN}&=&\sum_{l_1,l_2=0}^{d-1}p(l_1l_2,l_1l_2)
\end{eqnarray}
with Alice performing the measurement given above Eq. 15 of the manuscript.
     The maximal value attainable of $\mathcal{S}_{\rho:CCN}$ using an SOHS model is $\beta_{SOHS}=\frac{1}{d}$. 
\end{fakt}
\begin{proof}
    The proof follows the same lines as presented in \cite{sarkar15, sarkar11}. Let us now consider the steering functional $\mathcal{S}_{\rho:CCN}$ in Eq. \eqref{Witccn} and express it in terms of the SOHS model [see Eq. 2 of the manuscript] as
\begin{eqnarray}
   \mathcal{S}_{\rho:CCN}= \sum_{l_1,l_2=0}^{d-1}\sum_{\lambda_1,\lambda_2} \ p(\lambda_1,\lambda_2)p(a|\rho_{\lambda_1}\otimes\rho_{\lambda_2})p(a|\lambda_1,\lambda_2)
    \leq \sum_{\lambda_1,\lambda_2} \ p(\lambda_1,\lambda_2)\max_{l_1l_2}\{p(a|\rho_{\lambda_1}\otimes\rho_{\lambda_2})\}
\end{eqnarray}
where we used the fact that $\sum_{l_1,l_2}p(l_1l_2|\lambda_1,\lambda_2)=1$ for any $\lambda_1,\lambda_2$. Maximising over $\rho_{\lambda_1},\rho_{\lambda_2}$ gives us
\begin{eqnarray}
     \sum_{\lambda_1,\lambda_2} \ p(\lambda_1,\lambda_2)\max_{a}\{p(a|\rho_{\lambda_1}\otimes\rho_{\lambda_2})\}\leq \sum_{\lambda_1,\lambda_2} \ p(\lambda_1,\lambda_2)\max_{\rho_{\lambda_1},\rho_{\lambda_2}}\max_{a}\{p(a|\rho_{\lambda_1}\otimes\rho_{\lambda_2})\}.
\end{eqnarray}
As $\sum_{\lambda_i}p(\lambda_i)=1$ for $i=1,2$ and the above function being linear 
allows us to conclude that 
\begin{eqnarray}
    \beta_{SOHS}\leq \max_{\ket{\psi}_{A_1},\ket{\psi}_{A_2}}\max_{l_1l_2}\{p({l_1l_2}|\ \ket{\psi}_{A_1},\ket{\psi}_{A_2})\}.
\end{eqnarray}
Now, considering the measurements of trusted Alice and the optimizing over pure states $\ket{\psi}_{A_1},\ket{\psi}_{A_2}\in \mathbbm{C}^d$ gives us $\beta_{SOHS}\leq\frac{1}{d}$. This bound can be saturated when the sources prepare the maximally mixed $U_{A_1}\rho_iU_{A_1}^{\dagger}=\frac{1}{d}\sum_{k=0}^{d-1}\proj{k}_{A_i}\proj{k}_{B_i}$ and the measurement with Bob is $M_B=\{\proj{k l}\}_{B_0B_1}$ for all $k,l=0,\ldots,d-1$. This concludes the proof.
\end{proof}

\begin{fakt}
    Consider the swap-steering scenario depicted in Fig. 1 of the manuscript and the functional $\mathcal{S}_{\tilde{\rho}}$ \begin{equation}\label{univwit}
\mathcal{S}_{\tilde{\rho}}=c_{0,0}p_B(0)+\sum_{a=0}^{d^2-1}c_{a,01}p(a,0|01)+\sum_{a,k=0}^{d^2-1}c_{a,1j}p(a,0|1k)
\end{equation}
with Alice performing the tomographically complete measurements suggested above Eq. 19 of the manuscript.
The maximal value attainable of $\mathcal{S}_{\tilde{\rho}}$ using an SOHS model is $\beta_{SOHS}=0$. 
\end{fakt}
\begin{proof}
    Let us recall from Eq. 2 of the manuscript that for correlations admitting a SOHS model, we have that
    \begin{equation}
        p(a,b)= \sum_{\lambda_1,\lambda_2} p(\lambda_1)p(\lambda_2)p(a|\ \rho_{\lambda_1}\otimes\rho_{\lambda_2})p(b|\lambda_1,\lambda_2)
    \end{equation}
for all $a,b$. Consequently, we have from \eqref{univwit} that
\begin{eqnarray}\label{A21}
     \mathcal{S}_{\tilde{\rho}}&=&\sum_{\lambda_1,\lambda_2} p(\lambda_1)p(\lambda_2) \Gamma(\rho_{\lambda_1}\otimes\rho_{\lambda_2})p(0|\lambda_1,\lambda_2)
\end{eqnarray}
where 
\begin{eqnarray}\label{gamma11}
    \Gamma(\rho_{\lambda_1}\otimes\rho_{\lambda_2})=c_{0,0}+\sum_{a=0}^{d^2-1}c_{a,01}p(a|\rho_{\lambda_1}\otimes\rho_{\lambda_2})+\sum_{a,k=0}^{d^2-1}c_{a,1j}p(a|1k,\rho_{\lambda_1}\otimes\rho_{\lambda_2}).
\end{eqnarray}
Expanding the above formula \eqref{gamma11}, by considering the observables of Alice $\{A_{00},A_{01},A_{11},\ldots,A_{1d-1}\}$, and recalling that $c_{0,0}=-\lambda_{0,0},c_{a,01}=-\sum_{j=1}^{d^2-1}\lambda_{0,j}\omega^{aj}_{d^2}$ and $ c_{a,1k}=-\sum_{j=1}^{d^2-1}\lambda_{k,kj\oplus d^2}\omega^{aj}_{d^2}$, we obtain
\begin{eqnarray}\label{gamma21}
\Gamma(\rho_{\lambda_1}\otimes\rho_{\lambda_2})&=&-\lambda_{0,0}-\sum_{a=0}^{d^2-1}\sum_{j=1}^{d^2-1}\lambda_{0,j}\omega^{aj}_{d^2}\Tr(\proj{\delta_{a,01}}\rho_{\lambda_1}\otimes\rho_{\lambda_2})-\sum_{a,k=0}^{d^2-1}\sum_{j=1}^{d^2-1}\lambda_{k,kj\oplus d^2}\omega^{aj}_{d^2}\Tr(\proj{\delta_{a,1k}}\rho_{\lambda_1}\otimes\rho_{\lambda_2})\nonumber\\
\end{eqnarray}
where $\ket{\delta_{a,ik}}$ are the eigenvectors of $A_{ik}$ for all $i,k$. Recalling that $A_{ik}^j=\sum_{a}\omega^{aj}_{d^2}\proj{\delta_{a,ik}}$, we obtain from the above expression
\begin{eqnarray}
    \Gamma(\rho_{\lambda_1}\otimes\rho_{\lambda_2})&=&-\lambda_{0,0}-\sum_{j=1}^{d^2-1}\lambda_{0,j}\Tr(A_{01}^j\rho_{\lambda_1}\otimes\rho_{\lambda_2})-\sum_{k=0}^{d^2-1}\sum_{j=1}^{d^2-1}\lambda_{k,kj\oplus d^2}\Tr(A_{1k}^j\rho_{\lambda_1}\otimes\rho_{\lambda_2}).
\end{eqnarray}
Let us now notice that putting the observables $A_{ik}$ and recalling that $W_{\tilde{\rho}}=\sum_{i,j=0}^{d^2-1}\lambda_{i,j}\omega_{d^2}^{ij(d^2-1)/2}X^i_{d^2}Z^j_{d^2}$, we obtain that
\begin{eqnarray}
    \Gamma(\rho_{\lambda_1}\otimes\rho_{\lambda_2})&=&-\Tr(W_{\tilde{\rho}}\rho_{\lambda_1}\otimes\rho_{\lambda_2})\leq 0.
\end{eqnarray}
Thus, we have from \eqref{A21} that for correlations admitting a SOHS model
\begin{eqnarray}
     \mathcal{S}_{\tilde{\rho}}\leq0.
\end{eqnarray}
This concludes the proof.
\end{proof}

\section{Necessity of entanglement in both sources}

Consider that the source $S_1$ generates a separable state $\rho_{A_1B_1}=\sum_kp_k\tau_k\otimes\tau'_k$. Let us now evaluate the unnormalised post-measured state of Alice given any measurement of Bob $\{M_b\}$ as
\begin{eqnarray}
\rho^{PM}_{b,A_1A_2}=\sum_kp_k\Tr_{B_1B_2}(\I_A\otimes M_b(\tau_k)_{A_1}\otimes(\tau'_k)_{B_1}\otimes \rho_{A_2B_2}).
\end{eqnarray}
This implies that the state $\rho^{PM}_{A_1A_2}$ is separable as
\begin{eqnarray}\label{ApB1}
\rho^{PM}_{b,A_1A_2}=\sum_kp_k(\tau_k)_{A_1}\otimes\Tr_{B_1B_2}(\I_A\otimes M_b(\tau'_k)_{B_1}\otimes \rho_{A_2B_2})=\sum_kp_k(\tau_k)_{A_1}\otimes(\tau''_{k,b})_{A_2}.
\end{eqnarray}

Let us now evaluate the witness $\mathcal{S}_{\tilde{\rho}}$ \eqref{univwit} in this case. For this purpose, we observe that $p(a,0|01)$ and $p(a,0|1k)$ in \eqref{univwit} can be expressed using \eqref{ApB1} as
\begin{eqnarray}
p(a,0|x)=\sum_kp_k\Tr(\tau_k\otimes\tau''_{k,0}M_{a,x})
\end{eqnarray}
where $x\equiv01,1k$ and $M_{a,x}$ are the corresponding measurement elements. This allows us to express \eqref{univwit} as
\begin{equation}
\mathcal{S}_{\tilde{\rho}}=-\sum_kp_k\sum_{i,j=0}^{d^2-1}\lambda_{i,j}\omega_{d^2}^{ij(d^2-1)/2}\Tr(X^i_{d^2}Z^j_{d^2}\tau_k\otimes\tau''_{k,0})
\end{equation}
which implies that
\begin{eqnarray}
   \mathcal{S}_{\tilde{\rho}}= -\sum_kp_k\Tr(W_{\tilde{\rho}}\tau_k\otimes\tau''_{k,0})\leq0.
\end{eqnarray}
The other proposed witnesses can be evaluated the same way to obtain similar conclusion when one of the source generates a separable state. Consequently, to violate the above proposed witnesses both sources should generate entangled states.

\end{document}